  \providecommand\BibTeX{{%
    \normalfont B\kern-0.5em{\scshape i\kern-0.25em b}\kern-0.8em\TeX}}}
\newtheorem{thm}{Theorem}
\begin{document}

\title{Modeling User Behavior with Graph Convolution for Personalized Product Search}




\author{Lu Fan$^{+}$, Qimai Li$^{+}$, Bo Liu, Xiao-Ming Wu} 
\authornote{Corresponding author. \\+ Equal contribution. }
\affiliation{%
  \institution{The Hong Kong Polytechnic University}
  \city{Hong Kong}
  \country{China}
  \institution{\\ \{cslfan, csqmli, csbliu, csxmwu\}@comp.polyu.edu.hk}
}
\author{Xiaotong Zhang}
\affiliation{%
  \institution{Dalian University of Technology}
  \city{Dalian}
  \country{China}
}
\email{zxt.dut@hotmail.com}

\author{Fuyu Lv, Guli Lin, Sen Li}
\author{Taiwei Jin, Keping Yang}
\affiliation{%
  \institution{Alibaba Group}
  \city{Hangzhou}
  \country{China}
}
\email{lvfuyu91@sina.com, linguli@gmail.com}
\email{lisen.lisen@alibaba-inc.com}
\email{taiwei.jtw@alibaba-inc.com}
\email{shaoyao@taobao.com}


\renewcommand{\shortauthors}{Fan and Li, et al.}

\begin{abstract}

User preference modeling is a vital yet challenging problem in personalized product search. In recent years, latent space based methods have achieved state-of-the-art performance by jointly learning semantic representations of products, users, and text tokens. However, existing methods are limited in their ability to model user preferences. They typically represent users by the products they visited in a short span of time using attentive models and lack the ability to exploit relational information such as user-product interactions or item co-occurrence relations. In this work, we propose to address the limitations of prior arts by exploring local and global user behavior patterns on a user successive behavior graph, which is constructed by utilizing short-term actions of all users. To capture implicit user preference signals and collaborative patterns, we use an efficient jumping graph convolution to explore high-order relations to enrich product representations for user preference modeling. Our approach can be seamlessly integrated with existing latent space based methods and be potentially applied in any product retrieval method that uses purchase history to model user preferences. Extensive experiments on eight Amazon benchmarks demonstrate the effectiveness and potential of our approach. The source code is available at
\url{https://github.com/floatSDSDS/SBG}.

\end{abstract}

\begin{CCSXML}
<ccs2012>
   <concept>
       <concept_id>10002951.10003317.10003338.10010403</concept_id>
       <concept_desc>Information systems~Novelty in information retrieval</concept_desc>
       <concept_significance>500</concept_significance>
       </concept>
 </ccs2012>
\end{CCSXML}

\ccsdesc[500]{Information systems~Novelty in information retrieval}
\keywords{Personalized Product Search, User Preference Modeling, Graph Convolution}

\maketitle

\section{Introduction} \label{sec: intro}

Convenience drives the growth of e-commerce platforms such as Taobao
or Amazon.
Product search is an essential module in online shopping platforms, which guides users to browse and purchase products from a huge collection of commodities. Product search has its unique characteristics, making it distinct from web search, where information retrieval has made considerable progress.
    First, in web search engines, web pages are usually represented by long descriptive texts, while in e-commerce platforms, products are mainly represented by short texts such as titles and reviews, which may not always be informative.
    Second, other than textual representations, products are also associated with diverse relational data including ontology, spec sheet, figures, etc. 
    Third, there are various types of user-item interactions in e-commerce platforms. A user can browse, click, review, or purchase a product, or simply put it in his/her cart. Besides, there exist other structural information such as query-reformulation, shop browsing or category browsing, and shopping cart checkout. 
    
    It would be highly desirable yet challenging to utilize such rich information for personalized product search. Existing methods mainly exploit text data. Among them, a recent line of research~\cite{lse/conf/cikm/GyselRK16, hem/sigir/AiZBCC17, zam/cikm/AiHVC19, tem/conf/sigir/BiAC20} proposes to projects queries, items, and users into the same latent space and learn the representations of all entities with language modeling and information retrieval tasks, which enables the model to learn domain-specific semantic representations. However, they are limited in their ability to model user preferences, which is the core problem in product search. A common way to represent users is by the products they've visited during a period of time, 
    but long-term historical user behavior normally contains noisy preference signals.  HEM~\cite{hem/sigir/AiZBCC17} suffers from this problem since it represents a user with all his/her reviews of purchased products. 
    ZAM~\cite{zam/cikm/AiHVC19}, TEM~\cite{tem/conf/sigir/BiAC20}, and RTM~\cite{rtm/DBLP:conf/sigir/BiAC21} employ attentive models such as Transformer-based encoder to model user preferences and take into account both user behavior and query. For computational efficiency, user behavior sequences are usually truncated, and only recent behaviors are considered. While this helps to eliminate noisy preference signals, short-term user behavior may not contain sufficient preference signals (see more discussion in Sec.~\ref{sec:preliminaries}).

    To capture more useful user preference signals, it is a natural idea to explore various user-product interactions and product co-occurrence relationships, which are usually encoded in a graph. 
    Some recent efforts \cite{srrl:conf/cikm/LiuGCZ20, drem/journals/tois/AiZBC20} have been devoted to exploiting structural graph information for personalized product search. Ai et al.~\cite{drem/journals/tois/AiZBC20} proposed a dynamic relation embedding model (DREM). DREM constructs a unified knowledge graph to encode diverse relations and dynamic user-search/purchase behaviors and models the structural relationships via graph regularization. 
    Liu et al.~\cite{srrl:conf/cikm/LiuGCZ20} proposed graph embedding based structural relationship representation learning (GraphSRRL), which explicitly models the structural relationships, such as two users visiting the same product by a same query or a user visiting the same product by two different queries. 
    While DREM and GraphSRRL can model complex relationships, they include all previous user behaviors for preference modeling and may suffer from the noise induced by overly diverse signals. 

    In this work, we propose to explore local and global user behavior patterns on a user successive behavior graph (SBG) for user preference modeling. 
    The SBG is constructed by utilizing \emph{short-term} actions of all users, which collectively form a global behavior graph with rich relations among products. 
    To capture implicit user preference signals and collaborative patterns, we employ graph convolution to learn enriched product representations, which can be subsequently used for user preference modeling. Since user purchase behaviors are often sparse, it is helpful to explore high-order information on the SBG to model potential user interest, which requires stacking many graph convolution layers and leads to the well-known over-smoothing problem.
    To address this issue, we adopt an efficient jumping graph convolution layer that can effectively alleviate the over-smoothing effect. 
    To showcase the usefulness of our approach, we integrate it into a state-of-the-art latent space based model ZAM~\cite{zam/cikm/AiHVC19} and evaluate its performance on eight Amazon public benchmarks. The results show that our approach can significantly improve upon the base model and achieve better performance than other graph-based methods including DREM~\cite{drem/journals/tois/AiZBC20} and GraphSRRL~\cite{srrl:conf/cikm/LiuGCZ20}. It is worth noting that our approach is generic and can be potentially applied in any product retrieval method that models users using their purchase history.
    
    The contributions of this paper are summarized as follows.
    \begin{itemize}
    \item To our best knowledge, this is the first work to study how to improve product search with graph convolution, a technique that has recently been shown useful for many applications in various fields. 
   
    \item We propose to model successive user behavior and exploit local and global behavior patterns with graph convolution for user preference modeling. We also use an efficient graph convolution layer with jumping connections to alleviate the over-smoothing problem and theoretically analyze its effectiveness.  
    
    \item Extensive comparative experiments and ablation studies on eight Amazon benchmarks demonstrate the effectiveness of our proposed method, which can be potentially applied in any product retrieval method that models users with their purchase history for personalized product search. 
    \end{itemize}

\section{Related Work} \label{sec:related}

\subsection{Product Search}

\textbf{Latent spaced based product search.} \quad
In recent years, neural network based models have dominated the research of product search~\cite{guo2018multi, DBLP:conf/sigir/BiTDMC19, dbml/conf/cikm/XiaoRMSL19, alstp/guo2019attentive, robust/DBLP:conf/acl/NguyenRS20, click_through/DBLP:conf/www/YaoTCYXD021}. 
LSE is the first latent vector space based search framework proposed by Van Gysel et al.~\cite{lse/conf/cikm/GyselRK16}, which maps queries and products in the same space. Later on, Ai et al.~\cite{hem/sigir/AiZBCC17} proposed to consider user preferences and learn embeddings of users, products, and words jointly with two tasks: the language modeling task and the information retrieval task. Further, Ai et al.~\cite{zam/cikm/AiHVC19} considered user history behaviors conditioned on the current query and proposed a zero attention vector to control the degree of personalization.
In addition, in order to model long and short-term user preferences simultaneously, Guo et al.~\cite{alstp/guo2019attentive} designed a dual attention-based network to capture users' current search intentions and their long-term preferences. 
Recently, Transformer-based architecture have also been explored~\cite{tem/conf/sigir/BiAC20, rtm/DBLP:conf/sigir/BiAC21} for product search.
In this work, we employ the popular latent spaced based product search framework as in ZAM~\cite{zam/cikm/AiHVC19} and propose to enrich product representations via graph convolution.

\textbf{Graph-based product search.}
Early attempts employed relational information such as social signals or user behavior traces for search and ranking  problems~\cite{DBLP:journals/www/ChelaruOA14c, DBLP:conf/chiir/BadacheB17b, DBLP:conf/webi/Badache19a}. Recently, some studies attempted to model such information with graphs~\cite{geps/conf/www/ZhangWZ19, hypergraph_ps/DBLP:journals/mta/BuZQ20, drem/journals/tois/AiZBC20, srrl:conf/cikm/LiuGCZ20}. 
Zhang et al.~\cite{geps/conf/www/ZhangWZ19} proposed GEPS (Graph Embedding-based ranking model for Product Search) that employs pre-training techniques to learn product and query embeddings. As far as we know, it is the first attempt to use graphs for product search but it overlooks user preferences. 
Bu et al.~\cite{hypergraph_ps/DBLP:journals/mta/BuZQ20} proposed to model product textual semantic relationships with hypergraph to learn structural information. 
Ai et al.~\cite{drem/journals/tois/AiZBC20} proposed DREM (Dynamic Relation Embedding Model) that constructs a directed unified knowledge graph and jointly learns all embeddings through graph regularization. However, DREM lacks the ability to select informative information and is easily susceptible to noise. In this work, we exploit information on a successive behavior graph to avoid this problem. 
Liu et al.~\cite{srrl:conf/cikm/LiuGCZ20} proposed GraphSRRL (Graph embedding based Structural Relationship Representation Learning model) that explicitly utilizes specific user-query-product relationships. 
GraphSRRL pays attentions to local relations. In this work, we employ graph convolution with jump connections to aggregate high-order graph information.

\subsection{Information Retrieval with Graphs}

Graph-based methods have been extensively explored in the literature of sequential recommendation. These methods can be broadly categorized into two groups: embedding based methods and graph neural networks (GNN) based methods. Embedding-based methods~\cite{DBLP:conf/sigir/CraswellS07, DBLP:conf/sigir/GaoYLDN09, arci/conf/cikm/LiR0MM0M20, duong2020graph} employ network embedding techniques such as DeepWalk~\cite{deepwalk:conf/kdd/PerozziAS14} or Node2Vec~\cite{node2vec:conf/kdd/GroverL16}. They learn structural graph information by leveraging the skip-gram model~\cite{mikolov2013efficient}.
GNN based methods~\cite{arci/conf/cikm/LiR0MM0M20, niu2020dual, qi2020cgtr,  DBLP:conf/sigir/Wang0CLMQ20} employ GNN~\cite{DBLP:conf/nips/DefferrardBV16, kipf2016semi} to aggregate information over graphs. Our work is closer to GCE-GNN (Global Context Enhanced Graph Neural Networks)~\cite{DBLP:conf/sigir/Wang0CLMQ20} in that we both make use of global information. The difference is that GCE-GNN focuses more on transitions between items and utilizes the order of items, whereas we do not consider the order of products and pay more attention to their co-occurrence relationships. 

\section{Motivation and Insight} \label{sec:preliminaries}




In this section, we discuss the limitations of existing product search methods in user preference modeling and provide insight on our proposed approach.





\begin{figure}[t]
  \centering
  \includegraphics[width=\linewidth]{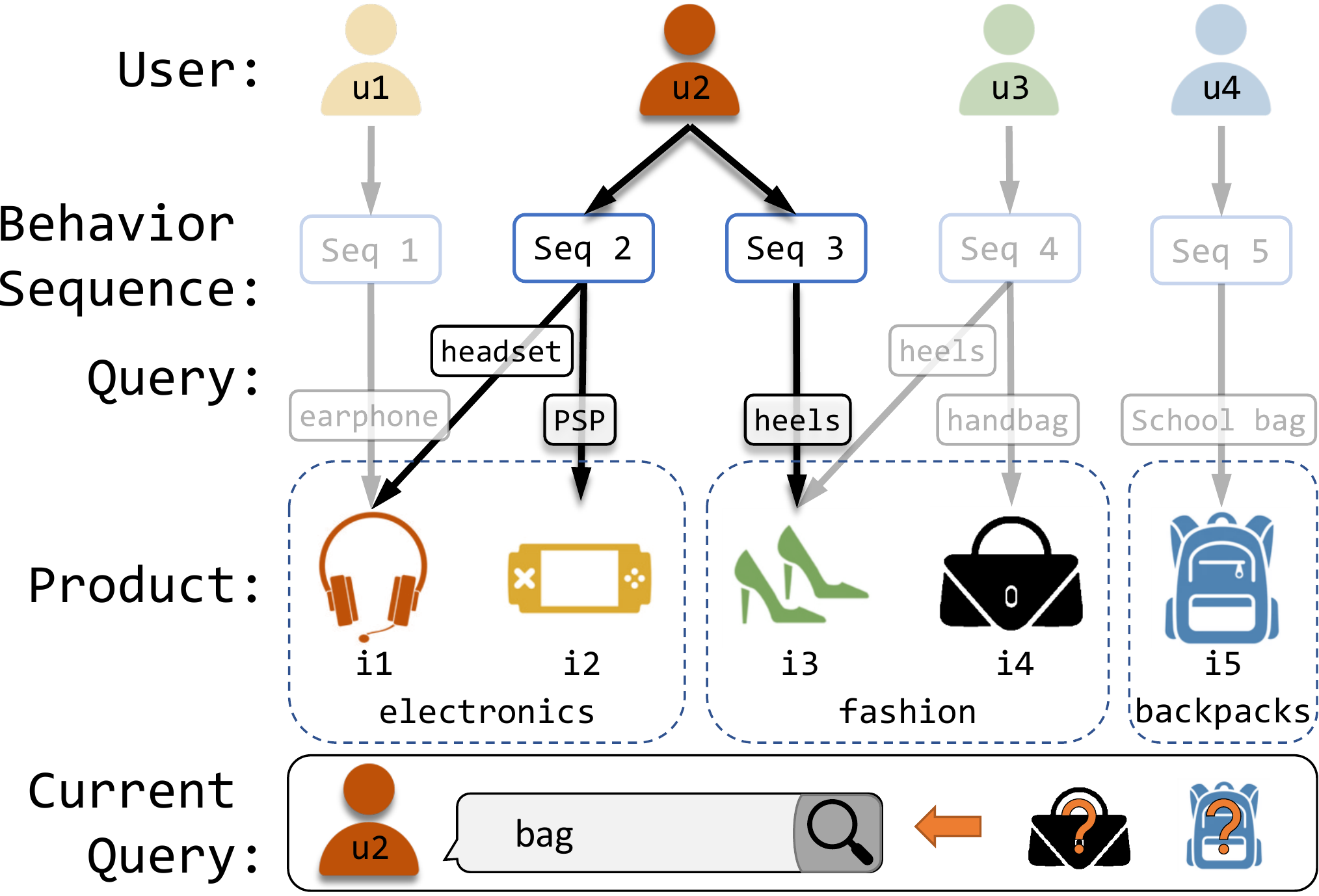}
  \caption{Illustration of exploiting global user successive behavior for personalized product search. 
  Given the query \textit{bag} issued by user $u_2$, the system is expected to retrieve suitable bags satisfying the user intent. By modeling the global user successive behavior graph with graph convolution, our proposed approach can capture implicit preference signals and yield desirable results.
  }
  \label{fig: motivation}
\end{figure}

\subsection{Limitations of Existing Methods}
Existing product search methods commonly model user preferences 
by considering their long-term or short-term behavior, but both of them have limitations.

\textbf{Long-term user behavior may contain noisy preference signals.} Typically, the long-term preference of a user is represented by all his/her historical interactions during a long period of time, which may contain many items and be overly diverse.
For example, as shown in Figure~\ref{fig: motivation}, the current query of user $u_2$ is ``\textit{bag}", but she has visited a couple of electronic devices in her purchase history, which are not related to bags. Therefore, representing $u_2$'s preference with all his/her historical engagement may impose negative effect on the current search induced by the irrelevant products in the purchase history. This problem is even more severe in HEM~\cite{hem/sigir/AiZBCC17}, which models users independently with their previous purchase reviews that may contain a lot noisy textual information.

\textbf{Short-term user behavior may not contain enough preference signals.} To address the above-mentioned issue, recent works including ZAM~\cite{zam/cikm/AiHVC19}, TEM~\cite{tem/conf/sigir/BiAC20} and RTM~\cite{rtm/DBLP:conf/sigir/BiAC21} only consider recent user actions in a short span of time and adopt attentive models such as Transformer-based encoder 
to put more emphasis on relevant products in the purchase record. 
However, a user's recent behavior may not contain enough preference signals for product retrieval. 
For example, when user $u_2$ searches for ``bag'', the remotely relevant item in his/her purchase history is ``heels'', which is not directly relevant to bags. Therefore, it is hard to tell whether the user is looking for handbags, backpacks, or other kinds of bags. 

\subsection{Insight of Our Proposed Approach}\label{sec:3_2_insight}
In this work, we propose to overcome 
the limitations of prior works in user preference modeling by exploring local and global user behavior patterns on a user successive behavior graph (SBG), which is constructed by utilizing \emph{short-term} actions of \emph{all}  users.
We then exploit \emph{high-order} relations in the SBG to capture implicit collaborative patterns and preference signals with an efficient jumping graph convolution and learn enriched product representations for user preference modeling. Our approach addresses the aforementioned problems in the following two aspects.

\textbf{Expanding the set of potentially intended products.} While short-term user behavior usually contains a limited number of products which may be inadequate to reflect user preferences, the \emph{global} SBG connects the products recently purchased by a user to other relevant or similar ones on the graph. In effect, it 
expands the set of potentially intended products of the user.
For example, in Figure~\ref{fig: motivation}, the heels $i_3$ is connected to the handbag $i_4$ because of the co-occurrence of $i_3$ and $i_4$ in the behavior sequence Seq 4 of user $u_3$. On the global SBG, the behavior sequences Seq 3 and Seq 4 are connected by $i_3$, and hence $i_4$ could be a potentially intended product for user $u_2$. 


\textbf{Making connected products more similar in the latent space.} 
Since a user is often represented as the ensemble of the products he/she bought, learning better product representations is crucial for user preference modeling. 
We leverage graph convolution to exploit the connectivity patterns in the SBG and make the embeddings of connected products more similar.
The enriched product representations can better reflect user preference. For example, by graph convolution, the fashion items $i_3$ and $i_4$ will be more similar. Therefore, when user $u_2$ searches for ``bag'', the handbag $i_4$ would be ranked higher than the backpack $i_5$, because $i_4$ is more similar to $i_3$ than $i_5$, and $i_3$ partially represents user $u_2$.

\section{Proposed Approach} \label{sec:method}

In this section, we present our proposed approach in detail,
which is built on the popular latent space based product search framework that learns semantic representations for products, users, and text tokens in the same embedding space. On top of it, we aim to learn enriched product representations to better represent users for search personalization. As shown in Figure~\ref{fig: framework}, we first construct a successive behavior graph from observed user behavior sequences. Then, we employ an efficient graph convolution with jumping connections to enrich product representations. The enriched product vectors can subsequently be used to represent users fro better preference modeling.  



\begin{figure}[t]
  \centering
  \includegraphics[width=.95\linewidth]{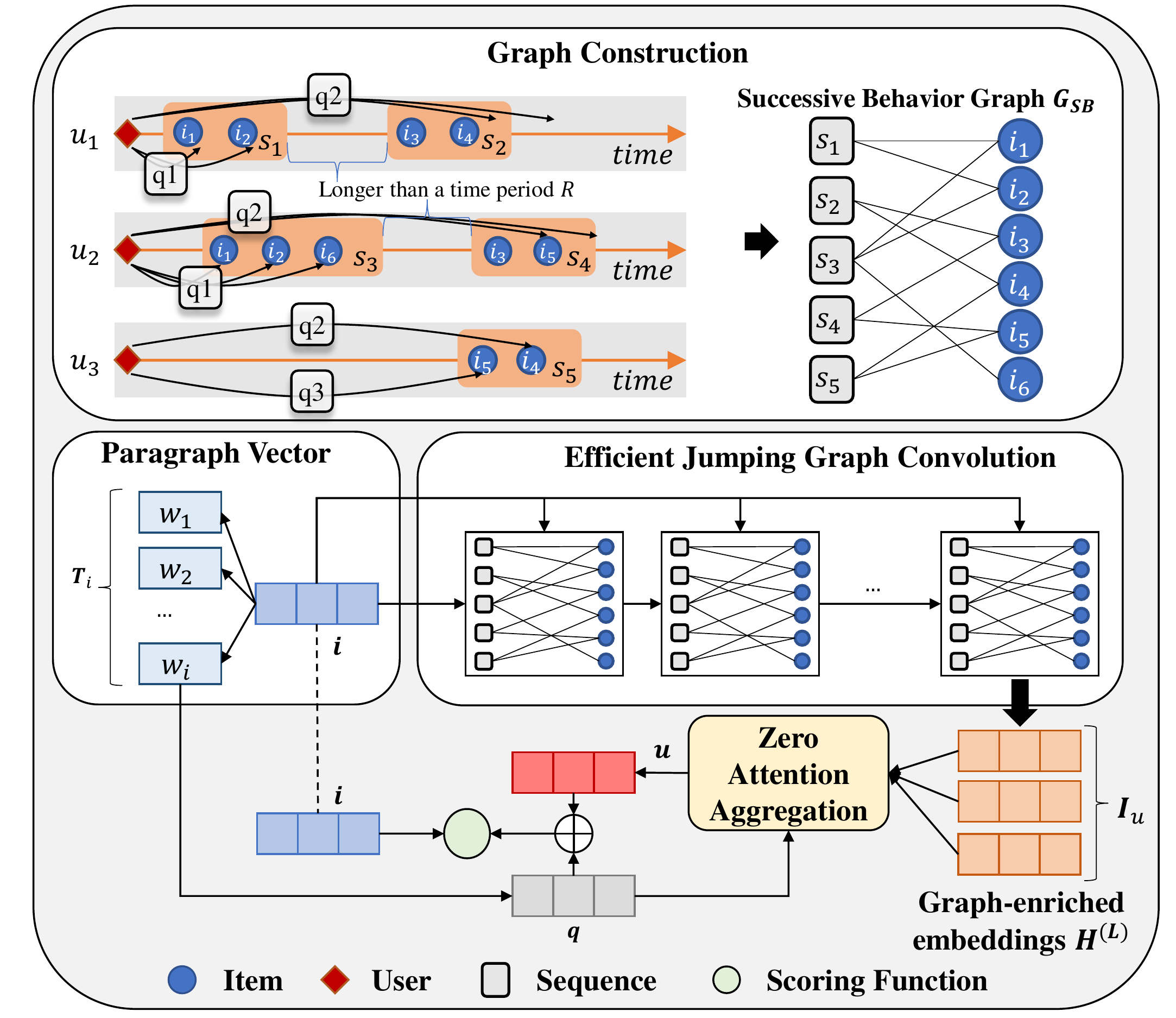}
  \caption{The framework of our proposed SBG model. 
  }
  \label{fig: framework}
\end{figure}


\subsection{Latent Space Based Product Search Framework} \label{sec:4_1_latent_space}

The goal of product search is to retrieve the most relevant products from a candidate pool $C$ for a user with his/her current query.
In this work, we adopt a typical latent space based generative framework for product search~\cite{lse/conf/cikm/GyselRK16, hem/sigir/AiZBCC17, zam/cikm/AiHVC19, tem/conf/sigir/BiAC20}, which learns embeddings for all products, users, and text tokens. The most relevant products are retrieved by matching product embeddings with the current context, which is usually represented by the user and query. 


More formally, after receiving a query $q$ from a user $u$, the system is expected to predict the probability $P(i|u, q)$ of user $u$ purchasing product $i$, for each product in the candidate set $C$, and then rank all the products by the probability. The available information includes the purchasing history of all the users and the text associated with the products, such as product name, product description, and product review. 
The latent space based product search framework learns entity embeddings from two tasks: the product retrieval task and the language modeling task.

\textbf{Product Retrieval Task.}\quad 
It aims to retrieve relevant products w.r.t. the current query. In \cite{hem/sigir/AiZBCC17}, the user intent $\bm {M}_{uq}$ is represented by a mix of the query $\bm q$ and the user's preference vector $\bm u$:  
\begin{equation}
    \bm {M}_{uq} = \lambda \bm {q} + (1 - \lambda) \bm {u},
\end{equation}
where $\lambda\in[0,1]$ is a balancing parameter. As such, $\bm {M}_{uq}$ encodes both the semantic meaning of the query and user preference. Then, the probability of purchasing product $i$ is computed as 
\begin{equation}
    P(i| u, q) = \frac{\exp(f(\bm {i}, \bm {M}_{uq}))}{\sum\limits_{i'\in C} \exp(f(\bm {i}', \bm {M}_{uq}))},
    \label{eq: p_i_uq}
\end{equation}
where $C$ is the set of all possible candidate products and $f$ is a similarity measure such as cosine similarity.

\textbf{Language Modeling Task.}\quad 
It aims to learn the embeddings of queries and products by modeling the text information. \cite{hem/sigir/AiZBCC17} proposes to jointly learn the word embedding $\bm w$ and product embedding $\bm i$ from the product's associated text by the paragraph vector (PV) model ~\cite{pv/le2014distributed}.
The PV model assumes that words or tokens can be generated from the entity 
and maximizes the likelihood 
\begin{equation}
    P(T_i | i)=\prod_{w\in T_i} \frac{\exp(\tau (w, i))}{\sum_{w'\in V} \exp(\tau(w', i))},
    \label{eq: p_w_i}
\end{equation}
where $\tau$ denotes a scoring function of product $i$ and its associated word $w$, and $T_i$ is the set of words associated with $i$. 

With the learned word embeddings, a query is then represented by a function of the word embeddings:
\begin{equation}
    \bm q = \phi(\{\bm w_q| w_q \in q\}),
\end{equation}
where $w_q$ is a word of query $q$, and $\phi$ 
can be a non-linear sequential encoder such as LSTM or Transformer. Since queries are usually short and the order of words often does not matter, we simply use an average function to obtain the query embedding.

\subsection{Efficient Graph Convolution with Jumping Connections} \label{sec:4_2/efficient_jumping}

As mentioned in Sec.~\ref{sec:preliminaries}, to improve user preference modeling and learn better product representations, we propose to utilize a global successive behavior graph and perform
graph convolution over the graph to capture implicit and complex collaborative signals. 
Here, we adopt an efficient graph convolution layer with jumping connections and provide theoretical analysis to show its advantage.

\textbf{Efficient Graph Convolution.}\quad 
In the past few years, graph convolutional networks (GCN) and variants  \cite{kipf2016semi,velickovic2017graph,hamilton2017inductive,chen2020simple/DBLP:conf/icml/ChenWHDL20} have been successfully applied to learn useful graph node representations for various graph learning and mining tasks.
In each layer of GCN, it performs feature propagation and transformation with connected nodes in the graph:
\begin{equation}\label{eq:convolution}
    \bm H^{(l)} = \sigma \left(\hat{\bm A} \bm H^{(l-1)} \bm W^{(l)}\right),
\end{equation}
where $\hat{\bm A}=\bm I+\bm D^{-1} \bm A$ is the (normalized) adjacency matrix with self-loops, and $\bm D$ is the degree matrix. $\bm H^{(l)}$ is the node embeddings produced by layer $l$. 
$\bm W^{(l)}$ denotes trainable parameters, and $\sigma$ is a non-linear function such as $\text{ReLU}(\cdot)$.




The projection layers (trainable parameters $\bm W^{(l)}$) and activation layers ($\sigma$) as shown in Eq.~(\ref{eq:convolution}) are commonly included in many GCN-based methods. However, as observed from our empirical study, the projection layers may distort the semantic product representations learned by language modeling in methods such as ZAM or HEM. Hence, we propose to use graph convolution without the projection layers to enrich product representations. Following the efficient design in Li et al.~\cite{li2019label}, we remove the projection layers and activation layers. Further, we add a balancing parameter $\omega$ to control the strength of self-information: 
\begin{equation}\label{eq:simple}
    \bm H^{(l)}=\left(\omega\bm I + (1-\omega)\bm D^{-1}\bm A\right) \bm H^{(l-1)}.
\end{equation}
\textbf{Jumping Graph Convolution Layer.}\quad 
Since user purchase behavior is often sparse, it is helpful to aggregate high-order information on the successive behavior graph to model potential user interest, as discussed in Sec.~\ref{sec:3_2_insight}. However, the ordinary graph convolution suffers from the well-known over-smoothing problem~\cite{dpgcn/DBLP:conf/aaai/LiHW18}, i.e., stacking too many convolution layers may make the node features (product representations) indistinguishable. To address this issue, Chen et al.~\cite{chen2020simple/DBLP:conf/icml/ChenWHDL20} proposed GCNII that adds initial residual connections to each GCN layer. We follow the same design to add jumping connections, i.e., feeding each convolution layer an additional input of the initial product representations $\bm H^{(0)}$:
\begin{equation}
    \tilde{\bm H}^{(l)}=\left(\omega\bm I + (1-\omega)\bm D^{-1}\bm A\right) \left (\beta \bm H^{(0)} + (1-\beta)\tilde{\bm H}^{(l-1) } \right)\label{eq:jump_enriched_product},
\end{equation}
where $\beta$ is a weight parameter determining the portion of initial features. Our experiments in Sec.~\ref{sec:5_3_2_effect_jumping} verify the effectiveness of jumping connections, which can alleviate the over-smoothing effect and enable utilizing high-order relations on the graph.

Further, we provide a theoretical analysis of jumping group convolution by measuring the diversity of product representations after graph convolution using Laplacian-Beltrami operator $\Omega(\cdot)$ \cite{chung1997spectral}. We compare the diversity of product representations with jumping connections ($\tilde{\bm H}^{(l)}$ in Eq.~(\ref{eq:jump_enriched_product})) and those without jumping connections ($\bm H^{(l)}$ in Eq.~(\ref{eq:simple})). We employ Laplacian-Beltrami operator, which measures the total variance of connected nodes:
\begin{equation}\label{eq:laplacian}
    \Omega(\bm H) = \sum_k \sum_{i,j} a_{ij} (\bm H_{i,k} - \bm H_{j,k})^2.
\end{equation}
High $\Omega(\bm H)$ indicates high diversity, and low $\Omega(\bm H)$ indicates severe over-smoothing. The following theorem shows jumping connections can substantially alleviate the over-smoothing effect of graph convolution.

\begin{thm}\label{thm:jump}
If the initial diversity $\Omega(\bm H^{(0)}) >0$, then for any integer $l > 0$, $\beta \in (0,1)$, and $\omega \in (0.5, 1)$, $\tilde{\bm H}^{(l)}$ is strictly more diverse than $\bm H^{(l)}$, i.e.,
\begin{equation}
  \Omega\left(\tilde{\bm H}^{(l)}\right) > \Omega\left({\bm H}^{(l)}\right). 
\end{equation}
When $l$ approaches infinity, jumping connections can prevent the diversity of product representations from collapsing to 0 (over-smoothing), i.e.,
\begin{equation}
    \lim_{l\to\infty} \Omega\left(\tilde{\bm H}^{(l)}\right) > \lim_{l\to\infty}\Omega\left(\bm H^{(l)}\right) = 0.
\end{equation}
\end{thm}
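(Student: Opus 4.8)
The plan is to diagonalize both recursions in a common eigenbasis and reduce the statement to an elementary scalar inequality. First I would put the diversity functional into matrix form: since the co-occurrence matrix $\bm A$ is symmetric, expanding the squares in Eq.~(\ref{eq:laplacian}) gives $\Omega(\bm H) = 2\,\mathrm{tr}(\bm H^{\top}\bm L\bm H)$ with $\bm L = \bm D-\bm A$ the combinatorial Laplacian. The immediate obstacle is that the propagation matrix $\bm P := \omega\bm I + (1-\omega)\bm D^{-1}\bm A$ used in Eqs.~(\ref{eq:simple})--(\ref{eq:jump_enriched_product}) is the \emph{random-walk} normalized operator and does not commute with $\bm L$. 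The key step is to conjugate everything by $\bm D^{1/2}$: setting $\bm G := \bm D^{1/2}\bm H$, one has $\Omega(\bm H) = 2\,\mathrm{tr}(\bm G^{\top}\mathcal{L}\bm G)$ with the \emph{symmetric} normalized Laplacian $\mathcal{L} = \bm I-\bm D^{-1/2}\bm A\bm D^{-1/2}$, while Eqs.~(\ref{eq:simple}) and (\ref{eq:jump_enriched_product}) become $\bm G^{(l)} = \bm S\bm G^{(l-1)}$ and $\tilde{\bm G}^{(l)} = \bm S\big(\beta\bm G^{(0)} + (1-\beta)\tilde{\bm G}^{(l-1)}\big)$ with $\bm S := \omega\bm I + (1-\omega)\bm D^{-1/2}\bm A\bm D^{-1/2}$. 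Crucially, $\bm S$ and $\mathcal{L}$ are both polynomials in the \emph{symmetric} matrix $\bm D^{-1/2}\bm A\bm D^{-1/2}$, hence simultaneously orthogonally diagonalizable.

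Next I would expand in that shared eigenbasis. Let $\{\bm v_i\}$ be orthonormal eigenvectors of $\bm D^{-1/2}\bm A\bm D^{-1/2}$ with eigenvalues $\mu_i\in[-1,1]$, so $\bm S\bm v_i = \lambda_i\bm v_i$ with $\lambda_i = \omega+(1-\omega)\mu_i$ and $\mathcal{L}\bm v_i = \nu_i\bm v_i$ with $\nu_i = 1-\mu_i$. Writing $\bm G^{(0)} = \sum_i\bm v_i\bm g_i^{\top}$ with rows $\bm g_i^{\top} = \bm v_i^{\top}\bm G^{(0)}$ and using orthonormality, a short computation yields the closed forms
\[
  \Omega\big(\bm H^{(l)}\big) = 2\sum_i\nu_i\,\lambda_i^{2l}\,\|\bm g_i\|^2,\qquad \Omega\big(\tilde{\bm H}^{(l)}\big) = 2\sum_i\nu_i\,\big(c_i^{(l)}\big)^2\,\|\bm g_i\|^2,
\]
where the scalar sequence obeys $c_i^{(l)} = \lambda_i\big(\beta+(1-\beta)c_i^{(l-1)}\big)$ with $c_i^{(0)} = 1$. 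The $\mu_i=1$ directions (constant vectors on the connected components) have $\nu_i=0$ and drop out of both sums; for every remaining index $\mu_i<1$, and this is precisely where $\omega\in(0.5,1)$ is used — it forces $\lambda_i = \omega+(1-\omega)\mu_i\in(0,1)$, with the lower bound $\omega>\tfrac12$ guaranteeing positivity even when $\mu_i=-1$, and $\omega<1$ together with $\mu_i<1$ giving $\lambda_i<1$. Note also that $\Omega(\bm H^{(0)})>0$ is equivalent to $\sum_{i:\nu_i>0}\nu_i\|\bm g_i\|^2>0$.

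The remaining work is purely scalar. I would show by induction on $l$ that $c_i^{(l)}\ge\lambda_i^{l}$ for every index with $\nu_i>0$, strictly once $l\ge2$: the cases $c_i^{(0)}=1=\lambda_i^0$ and $c_i^{(1)}=\lambda_i$ are immediate, and the inductive step gives $c_i^{(l)} = \lambda_i\big(\beta+(1-\beta)c_i^{(l-1)}\big)\ge\lambda_i^{l}+\beta\lambda_i\big(1-\lambda_i^{l-1}\big)>\lambda_i^{l}$ because $\beta,\lambda_i>0$ and $\lambda_i^{l-1}<1$. Since moreover each $c_i^{(l)}>0$ and some index with $\nu_i>0$ has $\|\bm g_i\|>0$ (by $\Omega(\bm H^{(0)})>0$), comparing the closed forms term by term gives $\Omega(\tilde{\bm H}^{(l)})>\Omega(\bm H^{(l)})$ for all $l\ge2$; at $l=1$ the two layers coincide ($\tilde{\bm H}^{(1)}=\bm H^{(1)}$ when $\tilde{\bm H}^{(0)}=\bm H^{(0)}$), so the strict inequality is really for $l\ge2$ and the $l=1$ case is an equality. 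For the limit, $\lambda_i\in(0,1)$ forces $\lambda_i^{2l}\to0$, hence $\Omega(\bm H^{(l)})\to0$, whereas $c_i^{(l)}$ converges geometrically to the fixed point $c_i^{(\infty)} = \beta\lambda_i/\big(1-(1-\beta)\lambda_i\big)>0$, so $\lim_{l\to\infty}\Omega(\tilde{\bm H}^{(l)}) = 2\sum_{i:\nu_i>0}\nu_i\big(c_i^{(\infty)}\big)^2\|\bm g_i\|^2>0$, which is the claimed strict separation in the limit.

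I expect the first step to be the only real obstacle: comparing $\bm P$ directly against the Laplacian $\bm L$ fails because they are not simultaneously diagonalizable, and one must recognize that conjugating by $\bm D^{1/2}$ moves the whole problem into the symmetric-normalized setting where $\bm S$ and $\mathcal{L}$ share an eigenbasis. After that the argument is routine eigenvalue bookkeeping (in particular tracking why the threshold $\omega=\tfrac12$ appears) plus the one-line induction above.
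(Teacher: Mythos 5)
Your proposal is correct and follows essentially the same strategy as the paper's proof: diagonalize the propagation operator, write both diversities as sums of the form $\sum_i \big(\text{filter}_l(\lambda_i)\big)^2\,\nu_i\,c_i^2$ over the shared eigenbasis, and compare the two scalar filter functions termwise --- your recursion $c_i^{(l)}=\lambda_i\big(\beta+(1-\beta)c_i^{(l-1)}\big)$ unrolls to exactly the paper's closed form $f_l(\lambda)=(1-\beta)^l g_l(\lambda)+\beta\sum_{k=1}^{l}(1-\beta)^{k-1}g_k(\lambda)$, and your fixed point $\beta\lambda_i/(1-(1-\beta)\lambda_i)$ is the paper's limit expression. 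That said, your version is more careful on two points where the paper's own proof is loose. First, the paper works directly with $\bm L=\bm I-\bm D^{-1}\bm A$, asserts $\Omega(\bm H)=\sum_k \bm H_{:,k}^\top \bm L\bm H_{:,k}$, and takes an orthogonal eigendecomposition $\bm U\bm\Lambda\bm U^\top$; but the random-walk Laplacian is not symmetric, and the quadratic form that actually equals $\sum_{i,j}a_{ij}(h_i-h_j)^2$ involves $\bm D-\bm A$, not $\bm I-\bm D^{-1}\bm A$. Your conjugation by $\bm D^{1/2}$ repairs both defects at once and is the right way to make the spectral argument rigorous. Second, you correctly observe that at $l=1$ the two layers coincide, since $\beta\bm H^{(0)}+(1-\beta)\bm H^{(0)}=\bm H^{(0)}$ gives $\tilde{\bm H}^{(1)}=\bm H^{(1)}$; equivalently $f_1\equiv g_1$, so the strict inequality in the theorem genuinely holds only for $l\ge 2$. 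The paper's strictness argument (``there exists $\lambda_i$ with $\lambda_i c_i^2>0$'') overlooks this edge case. Neither issue touches the substance of the result --- the over-smoothing claim and the $l\to\infty$ separation stand --- but your write-up is the more defensible one.
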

\begin{proof}
The proof is provided in the Appendix.
\end{proof}

\subsection{Modeling User Behavior with Graph Convolution}

Here, we show how to utilize local and global user behavior patterns to improve user representations in a zero attention model with the efficient jumping graph convolution. First, we construct a bipartite successive behavior graph. Then, we stack multiple jumping graph convolution layers to enrich product representations. Finally, we incorporate the graph-enriched product vectors into the zero attention model for user preference modeling. 



\textbf{Graph Construction.}\quad To construct the successive behavior graph, we first define successive behavior sequences in the training set. First, we sort all the observed purchased records in a chronological order for each user. If the time interval between two consecutive actions is within a period $R$ (e.g., a day, a week, or a month), the two actions are considered as successive and will be placed in the same successive behavior sequence.
Then, we construct a successive behavior graph $G_{SB}$, which is a bipartite graph between sequences and products. If and only if a product $i$ is in a sequence $S$, we form an edge between $i$ and $S$, denoted as $G_{SB}(i, S)=1$.


\textbf{Enriching Product Representations with Graph Convolution.} \quad
To enrich product representations, we apply jumping graph convolution as introduced in Sec.~\ref{sec:4_2/efficient_jumping} on the successive behavior graph $G_{SB}$. Let $\bm h^{(0)}_j$ denote the input of the first graph convolution layer for any entity $j$. We use the embeddings learned with PV as $\bm h^{(0)}_i$ for a product $i$. For a sequence $s$, $\bm h^{(0)}_s$ is randomly initialized. We then apply $L$ efficient jumping graph convolution layers as defined in Eq.~(\ref{eq:jump_enriched_product}) and obtain the graph-enriched product embedding $\tilde{\bm h}^{(L)}_i$ for each product. 


\textbf{Using Graph-enriched Product Representations for User Preference Modeling.} \quad
Based on the observation that the effect of personalization varies significantly in respect of query characteristics. Ai et al.~\cite{zam/cikm/AiHVC19} proposed ZAM that introduces a zero-vector to adaptively control the degree of personalization.
The representation of a user $u$ is composed of his/her recently visited products, which is computed as 
\begin{equation}
    \bm {u} = 
    \sum_{i\in I_u \cup \bm {0}} 
    \frac{\exp(s(q, i))}{\exp(s(q, \bm {0})) + \sum_{i'\in I_u} \exp(s(q, i'))}\bm i,
    \label{eq: zam}
\end{equation}
where $I_u$ is the product set in user $u$'s history visit records, $\bm {0}$ is a zero vector. The attention score $s(q, i)$ of a given product $i$ w.r.t. the current query $q$ is defined as
\begin{equation}
    s(q, i) = (\bm {i}^\top \tanh(\bm {W}_f^\top\bm {q} + \bm {b}_f))^\top\bm {W}_h,
    \label{eq: zam_s_qi}
\end{equation}
where $\bm {W}_h \in \mathbb{R}^{d_a}$, $\bm {W}_f \in \mathbb{R}^{d \times d_a \times d}$, $\bm {b}_f \in \mathbb{R}^{d \times d_a}$ are the trainable parameters, and $d_a$ is the hidden dimension of the user-product attention network. In particular, $exp(s(q, \bm 0))$ is calculated by Eq.~(\ref{eq: zam_s_qi}) with $\bm i$ as a learnable inquiry vector $\bm {0'}\in \mathbb{R}^d$.



To incorporate the graph-enriched product representations for user preference modeling, we use $\tilde{\bm h}_i^{(L)}$ to substitute $\bm {i}$ in Eq.~(\ref{eq: zam}) and Eq.~(\ref{eq: zam_s_qi}), i.e.,
\begin{equation}
\begin{aligned}
    & \bm {u} = 
    \sum_{i\in I_u \cup \bm {0}} 
    \frac{\exp(s(q, i))}{\exp(s(q, \bm {0})) + \sum_{i'\in I_u} \exp(s(q, i'))}\tilde{\bm h}_i^{(L)}, \\
    & s(q, i) = (\tilde{\bm {h}}_i^\top \tanh(\bm {W}_f^\top\bm {q} + \bm     {b}_f))^\top\bm {W}_h.
    \label{zam_substitute}    
\end{aligned}
\end{equation}





\subsection{Model Optimization}

Following ZAM~\cite{zam/cikm/AiHVC19}, we jointly optimize the product retrieval task and the language modeling task.
The product retrieval loss is 
\begin{equation}
    L_{PR} = -\sum_{(u,i,q)}\log P(i|u,q) 
        = -\sum_{(u, i, q)}\log \frac{\exp(f(\bm {i}, \bm {M}_{uq}))}{\sum\limits_{i'\in C} \exp(f(\bm {i}', \bm {M}_{uq}))},
\label{eq: loss_pr_uqi}
\end{equation}
which is optimized over all triples $(u, i, q)$ in the training set, where a triple $(u, i, q)$ represents a product $i$ purchased by a user $u$ under the submitted query $q$. The language modeling loss is
\begin{equation}
    L_{LM} = -\sum_i \log P(T_i | i) 
        =  -\sum_{i}\sum_{w\in T_i}\log  \frac{\exp(\tau (w, i))}{\sum\limits_{w'\in V} \exp(\tau(w', i))}.
\label{eq: loss_lm_iti}
\end{equation}
Hence, the total loss is
\begin{equation}
    L_{total} = L_{PR} + L_{LM} \\
        =  -\sum_{(u, i, q)}\log P(i | u, q) -\sum_{i}\log P(T_i| i). 
\end{equation}

\textbf{Remark.} It is worth noting that we only use the graph-enriched product embeddings to represent users but do not use them to represent products themselves in the product retrieval task (Eq.~(\ref{eq: p_i_uq})) or the language modeling task (Eq.~(\ref{eq: p_w_i})), because the mixed representations may make products lose their uniqueness and hurt performance, which is verified by our empirical study.

Since the candidate set $C$ 
and vocabulary $V$ 
are usually extremely large, it is impractical to compute the log likelihood in Eq.~(\ref{eq: loss_pr_uqi}) and Eq.~(\ref{eq: loss_lm_iti}). A common solution 
is to sample only a portion of negative products to approximate the denominator of Eq.~(\ref{eq: loss_pr_uqi}) and Eq.~(\ref{eq: loss_lm_iti}):
\begin{equation}
\begin{aligned}
    L=
        & -\sum_{(u, i, q)} [\log \tau (w, i) + k_w \mathbb{E}_{w' \sim P_w} \log \tau (-w', i) \\
        & + \log f(i, M_{uq}) + k_i \mathbb{E}_{i' \sim P_i}(\log f(-i, M_{uq}))],
\end{aligned}
\label{eq: loss_general}
\end{equation}
where $k_w$ and $k_i$ are the negative sampling rates for words and products, respectively. In this work, 
we follow HEM~\cite{hem/sigir/AiZBCC17} to randomly sample negative words from the vocabulary with $P_w$ as the unigram distribution raised to the 3/4rd power, and randomly sample negative products from all products in the training set.

\section{Experiments} \label{sec:experiment}

In this section, we introduce our experimental settings and present experimental results. Our experiments try to answer the following research questions:

\begin{itemize}
    \item \textbf{RQ1:} How is the performances of SBG compared to the base model ZAM? 
    \item \textbf{RQ2:} How does SBG perform compared to state-of-the-art methods for personalized product search?
    \item \textbf{RQ3:} How useful is graph convolution? Can the proposed jumping connection alleviate over-smoothing?
    \item \textbf{RQ4:} What is the effect of time interval $R$ on the constructed successive behavior graph $G_{SB}$?
\end{itemize}


\subsection{Experimental Setup}

\subsubsection{Datasets} 

\begin{table}[]
\small
\caption{Dataset Statistics.}
\scalebox{0.9}{
\begin{tabular}{lllll}
\toprule \hline
\textbf{}           & \textbf{Magazine}    & \textbf{Software} & \textbf{Phones} & \textbf{Toys\&Games}   \\ \hline
\textbf{\# reviews} & 4,583                & 25,086            & 133,792         & 148,756                \\
\textbf{\#user}     & 694                  & 3,642             & 17,464          & 16,370                 \\
\textbf{\#query}    & 170                  & 999               & 163             & 399                    \\
\textbf{\#product}  & 876                  & 5,875             & 10,278          & 11,875                 \\
\textbf{\#seq}      & 2,337                 & 17,814             & 79,224           & 78,616                  \\
\textbf{\#edge}     & 3,078                & 16,391            & 93,174          & 111,578                \\
\hline \hline
\textbf{}           & \textbf{Instruments} & \textbf{Clothing} & \textbf{Health} & \textbf{Home\&Kitchen} \\ \hline
\textbf{\# reviews} & 209,229              & 270,854           & 334,025         & 545,083                \\
\textbf{\#user}     & 23,887               & 37,914            & 36,639          & 65,510                 \\
\textbf{\#query}    & 492                  & 2,000             & 793             & 900                    \\
\textbf{\#product}  & 9,756                & 23,033            & 17,956          & 27,888                 \\
\textbf{\#seq}               &  100,945               & 160,959           & 201,513         & 333,709                \\
\textbf{\#edge}              &  149,401              & 189,985           & 256,095         & 408,607               \\
\hline \bottomrule
\end{tabular}}
\label{tb:data}
\end{table}

Our experiments are conducted on the well-known Amazon review dataset\footnote{http://jmcauley.ucsd.edu/data/amazon}~\cite{amazon/ni2019justifying}, 
which includes product reviews and metadata such as product titles and categories. 
It was first introduced for product search by Van Gysel et al.~\cite{lse/conf/cikm/GyselRK16, VanGysel2017sert} and has become a benchmark dataset for evaluating product search methods as used in many recent studies~\cite{hem/sigir/AiZBCC17, zam/cikm/AiHVC19, drem/journals/tois/AiZBC20, srrl:conf/cikm/LiuGCZ20}. Product reviews are generally used as text corpus for representing products or users, and product categories are used as queries to simulate a search scenario. 

In our experiments, we use the 5-core data, and for each sub dataset, we filter out products with no positive records. 
The statistics of the filtered datasets are shown in Table~\ref{tb:data}.




\subsubsection{Baselines} 

We compare our proposed approach SBG with the following baselines. 

\begin{itemize}
    \item \textbf{HEM~\cite{hem/sigir/AiZBCC17}} assumes that users and products are independent. It employs PV to learn the representations of users, products, and words jointly. The user-query pair is projected to the same latent space with products. 
    \item \textbf{ZAM~\cite{zam/cikm/AiHVC19}} also employs PV to learn semantic representations of products and words. Users are represented by the products they visited. In particular, ZAM proposes a zero attention vector to control the degree of personalization.
    \item \textbf{DREM~\cite{drem/journals/tois/AiZBC20}} employs knowledge graph embedding techniques and constructs a unified knowledge graph that represents both static entity features and dynamic user searching behaviors. Embeddings of all entities are learned via a graph regularization loss.
    \item \textbf{GraphSRRL~\cite{srrl:conf/cikm/LiuGCZ20}} explicitly utilizes structural patterns in a user-query-product graph. It defines three specific structural patterns that represent three frequent user-query-product interactions.
\end{itemize}

\begin{table*}[]
\small
\caption{Comparison of our proposed method with baselines on eight Amazon review sub datasets. $^*$ and $\dagger$ denote the significant differences to ZAM and the best baseline, respectively in paired t-test with $p \le 0.01$. The best results are highlighted in bold.}
\scalebox{0.9}{
\begin{tabular}{l|ccccc|ccccc}
\toprule
\hline
                    & \multicolumn{5}{c|}{\textbf{Software}}                                                            & \multicolumn{5}{c}{\textbf{Magazine}}                                                             \\
                    & \textbf{HR@10}    & \textbf{MRR@100}  & \textbf{NDCG@10}  & \textbf{NDCG@20}  & \textbf{NDCG@100} & \textbf{HR@10}    & \textbf{MRR@100}  & \textbf{NDCG@10}  & \textbf{NDCG@20}  & \textbf{NDCG@100} \\ \hline
\textbf{HEM}        & 0.3785            & 0.2181            & 0.2474            & 0.2688            & 0.3024            & 0.4115            & 0.2183            & 0.2526            & 0.2837            & 0.3211            \\
\textbf{ZAM}        & 0.4841            & 0.2900            & 0.3289            & 0.3465            & 0.3713            & 0.4349            & 0.2290            & 0.2657            & 0.2968            & 0.3384            \\
\textbf{DREM}       & 0.5058            & 0.3189            & 0.3555            & 0.3740            & 0.4029            & 0.3985            & 0.2130            & 0.2469            & 0.2725            & 0.3103            \\
\textbf{GraphSRRL}  & 0.2555            & 0.1415            & 0.1598            & 0.1795            & 0.2186            & 0.2682            & 0.1353            & 0.1579            & 0.1795            & 0.2190            \\
\textbf{SBG (ours)} & \textbf{0.5629*†} & \textbf{0.3759*†} & \textbf{0.4144*†} & \textbf{0.4302*†} & \textbf{0.4501*†} & \textbf{0.4679*†} & \textbf{0.2568*†} & \textbf{0.2952*†} & \textbf{0.3255*†} & \textbf{0.3657*†} \\ \hline
                    & \multicolumn{5}{c|}{\textbf{Phones}}                                                              & \multicolumn{5}{c}{\textbf{Toys\&Games}}                                                                 \\
                    & \textbf{HR@10}    & \textbf{MRR@100}  & \textbf{NDCG@10}  & \textbf{NDCG@20}  & \textbf{NDCG@100} & \textbf{HR@10}    & \textbf{MRR@100}  & \textbf{NDCG@10}  & \textbf{NDCG@20}  & \textbf{NDCG@100} \\ \hline
\textbf{HEM}        & 0.4049            & 0.2411            & 0.2697            & 0.2913            & 0.3314            & 0.2509            & 0.1330            & 0.1501            & 0.1729            & 0.2217            \\
\textbf{ZAM}        & 0.5160            & 0.2995            & 0.3386            & 0.3659            & 0.4083            & 0.4358            & 0.2314            & 0.2653            & 0.2985            & 0.3541            \\
\textbf{DREM}       & 0.5836            & 0.3365            & 0.3841            & 0.4107            & 0.4419            & 0.5557            & 0.3124            & 0.3554            & 0.3916            & 0.4368            \\
\textbf{GraphSRRL}  & 0.1883            & 0.0887            & 0.1035            & 0.1225            & 0.1581            & 0.3811            & 0.1822            & 0.2166            & 0.2475            & 0.2956            \\
\textbf{SBG (ours)} & \textbf{0.5878*}  & \textbf{0.3447*†} & \textbf{0.3904*}  & \textbf{0.4173*}  & \textbf{0.4553*†} & \textbf{0.571*†}  & \textbf{0.3182*}  & \textbf{0.3648*†} & \textbf{0.3976*}  & \textbf{0.4418*}  \\ \hline
                    & \multicolumn{5}{c|}{\textbf{Instruments}}                                                              & \multicolumn{5}{c}{\textbf{Clothing}}                                                        \\
                    & \textbf{HR@10}    & \textbf{MRR@100}  & \textbf{NDCG@10}  & \textbf{NDCG@20}  & \textbf{NDCG@100} & \textbf{HR@10}    & \textbf{MRR@100}  & \textbf{NDCG@10}  & \textbf{NDCG@20}  & \textbf{NDCG@100} \\ \hline
\textbf{HEM}        & 0.4754            & 0.2562            & 0.2974            & 0.3262            & 0.3649            & 0.5146            & 0.3013            & 0.3409            & 0.3673            & 0.4065            \\
\textbf{ZAM}        & 0.4951            & 0.2806            & 0.3202            & 0.3482            & 0.3911            & 0.5606            & 0.3230            & 0.3671            & 0.3972            & 0.4371            \\
\textbf{DREM}       & 0.4503            & 0.2475            & 0.2856            & 0.3107            & 0.3523            & 0.4129            & 0.2460            & 0.2741            & 0.2999            & 0.3488            \\
\textbf{GraphSRRL}  & 0.5073            & 0.2770            & 0.3194            & 0.3507            & 0.3951            & 0.1789            & 0.0811            & 0.0946            & 0.1152            & 0.1537            \\
\textbf{SBG (ours)} & \textbf{0.5184*†} & \textbf{0.3052*†} & \textbf{0.3451*†} & \textbf{0.3713*†} & \textbf{0.4116*†} & \textbf{0.602*†}  & \textbf{0.3528*†} & \textbf{0.4001*†} & \textbf{0.4294*†} & \textbf{0.4663*†} \\ \hline
                    & \multicolumn{5}{c|}{\textbf{Health}}                                                              & \multicolumn{5}{c}{\textbf{Home\&Kitchen}}                                                        \\
                    & \textbf{HR@10}    & \textbf{MRR@100}  & \textbf{NDCG@10}  & \textbf{NDCG@20}  & \textbf{NDCG@100} & \textbf{HR@10}    & \textbf{MRR@100}  & \textbf{NDCG@10}  & \textbf{NDCG@20}  & \textbf{NDCG@100} \\ \hline
\textbf{HEM}        & 0.3841            & 0.2292            & 0.2572            & 0.2778            & 0.3153            & 0.4507            & 0.2846            & 0.3164            & 0.3349            & 0.3666            \\
\textbf{ZAM}        & 0.4528            & 0.2622            & 0.2952            & 0.3235            & 0.3727            & 0.5247            & 0.3219            & 0.3593            & 0.3853            & 0.4275            \\
\textbf{DREM}       & 0.5667            & 0.3620            & 0.3985            & 0.4276            & 0.4686            & 0.5793            & 0.3894            & 0.4236            & 0.4501            & 0.4904            \\
\textbf{GraphSRRL}  & 0.5073            & 0.2770            & 0.3194            & 0.3507            & 0.3951            & 0.4534            & 0.2199            & 0.2604            & 0.2976            & 0.3534            \\
\textbf{SBG (ours)} & \textbf{0.6181*†} & \textbf{0.3696*†} & \textbf{0.4174*†} & \textbf{0.4458*†} & \textbf{0.4815*†} & \textbf{0.6419*†} & \textbf{0.4095*†} & \textbf{0.4546*†} & \textbf{0.4802*†} & \textbf{0.5152*†} \\ \hline \bottomrule
\end{tabular}}

\label{tb: amazon_8}
\end{table*}

\subsubsection{Evaluation Protocol} 

We partition each sub dataset into a training set, a validation set, and a test set. Similar to the process of constructing $G_{SB}$, we sort the reviews in chronological order for each user and split the full record into successive behavior sequences. Then, the last sequence is used as the test set, and the second last sequence is used as the validation set.  


For performance measurement, we adopt three metrics: hit rate (HR@K), normalized discounted cumulative gain (NDCG@K), and mean reciprocal rank (MRR). For each sequence, the target products are mixed up with candidates randomly sampled from the entire product set, forming a candidate set of size 1000.

\subsubsection{Implementation Details} 

For a fair comparison, we re-implement HEM, ZAM, and DREM using the same encoder, evaluation setting, and negative sampling method. For DREM, we build a unified heterogeneous graph that contains the user-product review relation and the product-category belonging relation. In addition, we connect products and users with their associated words during training. For GraphSRRL, we use the official implementation\footnote{https://github.com/Shawn-hub-hit/GraphSRRL-master} with our dataset splits and evaluation protocols. 
For all methods, the batch size is set to 1024, and the ADAM optimizer is used with an initial learning rate of 0.001. All the entity embeddings are initialized randomly with dimension 64. 
For our SBG, we set the attention dimension $d_a$ to 8, and the user-query balancing parameter $\lambda$ to 0.5. We employ 4 layers of jumping graph convolution, and the weight of self-loop is set to 0.1. The strength of jumping connection $\beta$ is also set to 0.1. The negative sampling rate for each word is set to 5, and that for each item is set to 2. We report the evaluation metrics on the converged model. The reported results are averaged over multiple runs, and the significant differences are computed based on the paired t-test with $p \le 0.01$.

\subsection{Main Results}

\begin{table}[]
\small
\caption{The improvement percentages of NDCG@10 over ZAM by DREM (the best baseline) and our SBG.}
\scalebox{0.9}{
\begin{tabular}{lllll}
\toprule
\hline
                    & \textbf{Magazine}    & \textbf{Software} & \textbf{Phones} & \textbf{Toys\&Games}   \\ \hline
\textbf{DREM}       & +8.08\%              & -7.06\%           & +13.44\%         & +33.96\%                \\
\textbf{SBG} & +25.99\%              & +11.10\%           & +15.30\%         & +37.50\%                \\ \hline
\textbf{}           & \textbf{Instruments} & \textbf{Clothing} & \textbf{Health} & \textbf{Home\&Kitchen} \\ \hline
\textbf{DREM}       & -10.83\%              & -25.34\%          & +35.00\%         & +17.89\%                \\
\textbf{SBG} & +7.76\%              & +8.99\%            & +41.39\%         & +26.53\%                \\ \hline \bottomrule
    \end{tabular}}
\label{tb:gain}
    \end{table}

Table~\ref{tb: amazon_8}
summarizes the overall performance of our SBG and the baselines on eight Amazon review sub datasets. Besides, the improvement percentages of SBG and the best baseline DREM (also graph-based) over ZAM are summarized in Table~\ref{tb:gain}. We can make the following observations.

\begin{itemize}

    \item First of all, SBG significantly improves over ZAM in every tested domain/dataset. Since NDCG reflects the quality of the entire ranking list, we calculate the performance gain of SBG compared to ZAM in NDCG@10 as shown in Table~\ref{tb:gain}. SBG improves ZAM by at least 7.76\% on \textit{Instruments}, and up to 41.39\% on \textit{Health}.
    \textbf{RQ1} is answered. 
    
    \item SBG achieves significant improvements over other baselines in nearly all cases, which answers \textbf{RQ2}. 
    

    
    \item 
    As shown in Table~\ref{tb:gain}, 
    the performances of DREM and SBG are consistent on all domains. In the domains where DREM fails including \textit{Software}, \textit{Instruments}, and \textit{Clothing}, the improvement percentages of SBG are also less significant. It indicates that the effectiveness of graph-based methods may be affected by the characteristics of different domains.  

    \item GraphSRRL assumes that the pre-defined patterns are frequent, which may not hold in some domains, as evidenced by the experimental results. 
\end{itemize}



\begin{figure}[t]
  \centering
    \includegraphics[width=0.8\linewidth]{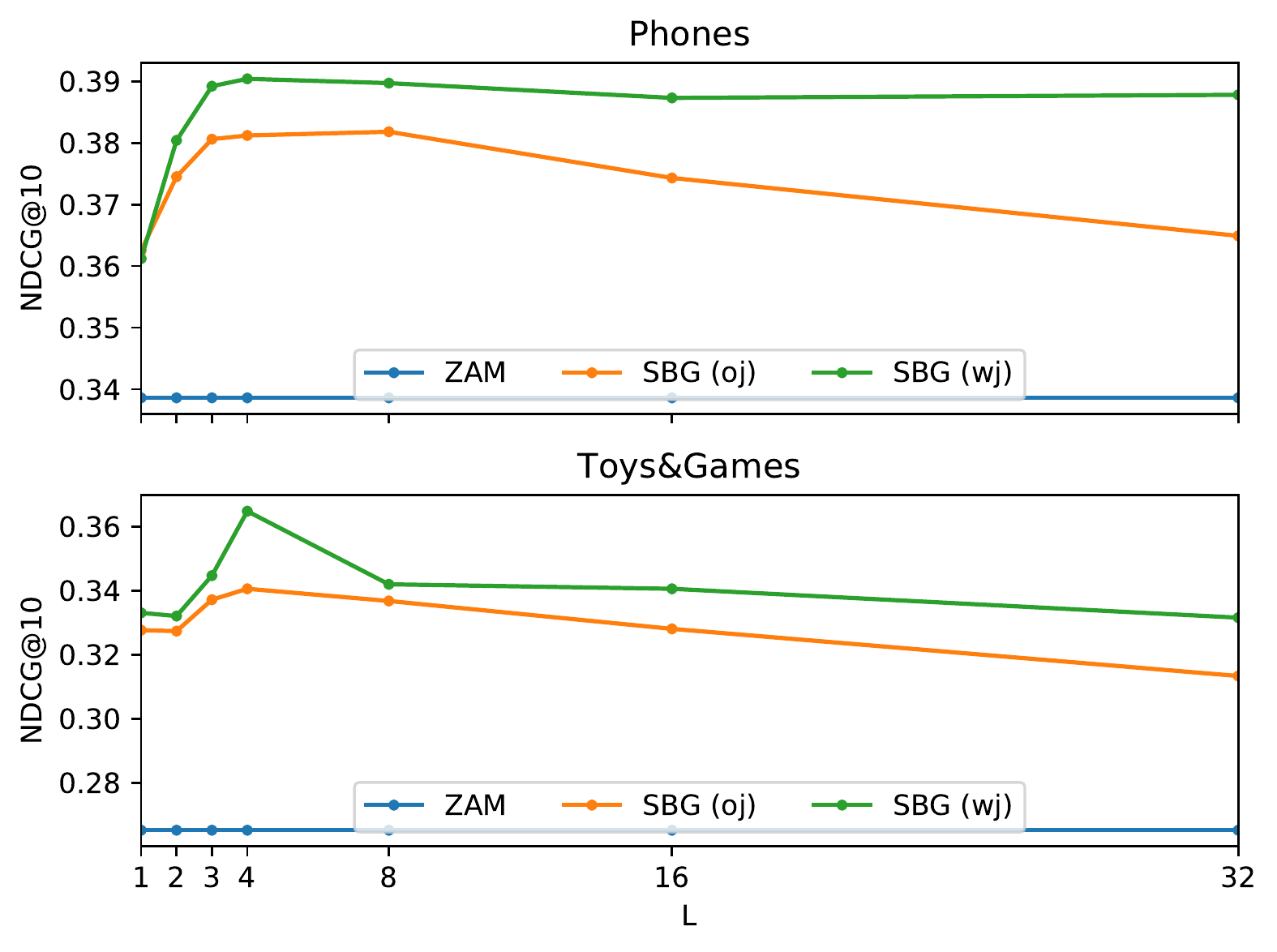}
  \caption{Ablation studies of jumping connections and the number $L$ of the graph convolutional layers. The results in NDCG@10 with respect to $L$ on \textit{Toys\&Games} and \textit{Phones} are reported. 
  SBG (wj) stands for our proposed SBG with jumping connections, and SBG (oj) stands for SBG without jumping connections.}
  \label{fig: l}
\end{figure}

\subsection{Analysis and Discussion}

\subsubsection{\textbf{Effect of the Order of Graph Convolution}}
To answer \textbf{RQ3}, we investigate the effect of the number of graph convolution layers $L$ and the jumping connections. We conduct experiments on \textit{Phones} and \textit{Toys\&Games} and vary $L$ from 0 to 64. 
We compare our SBG with jumping connections (denoted as SBG (wj)) with ZAM and a variant SBG (oj), which stands for SBG without jumping connections. 
The results are shown in Figure~\ref{fig: l}.

It can be seen that a few graph convolution layers can bring substantial performance gains. In our experiments on \textit{Phones} and \textit{Toys\&Games}, we achieve the best performance with $L=4$.   
However, as $L$ increases, the performances of SBG (wj) and SBG (oj) drop due to the over-smoothing effect. Especially on \textit{Toys\&Games}, we observe a significant performance drop when $L$ is larger than 4. 


\subsubsection{\textbf{Effect of Jumping Connections.}} \label{sec:5_3_2_effect_jumping}
It can be seen from Figure~\ref{fig: l} that as $L$ increases, the performance of SBG (wj) drops much slower than that of SBG (oj), especially on \textit{Phones}, demonstrating the effectiveness of jumping connections in alleviating the over-smoothing effect. 

\subsubsection{\textbf{Effect of Time Interval $R$.}}

To answer \textbf{RQ4},
we evaluate the performance of SBG with respect to different time scale $R\in \{a\ day, \\ a\ week, a\ month, a\ quater, a\ year\}$. The results shown in Figure~\ref{fig: time_scale}. It can be observed that the best performance is usually achieved when $R$ is $a\ day$ or $a\ week$. When $R$ is $a\ month$ or longer, the performances often drop. 
This is probably because the behavior sequences are overly diverse and contain noisy preference signals when the time scale is large.
The only exception is on \textit{Magazine}, where the performance of SBG slightly increases as $R$ becomes larger. However, \textit{Magazine} is a small dataset that does not have much training data and many products have limited records in the test set. Hence, content-based product features may not be informative enough, and a larger $R$ helps a cold product reach to broader neighborhood and leads to better performance.

\begin{figure}[t]
  \centering
    \includegraphics[width=1\linewidth]{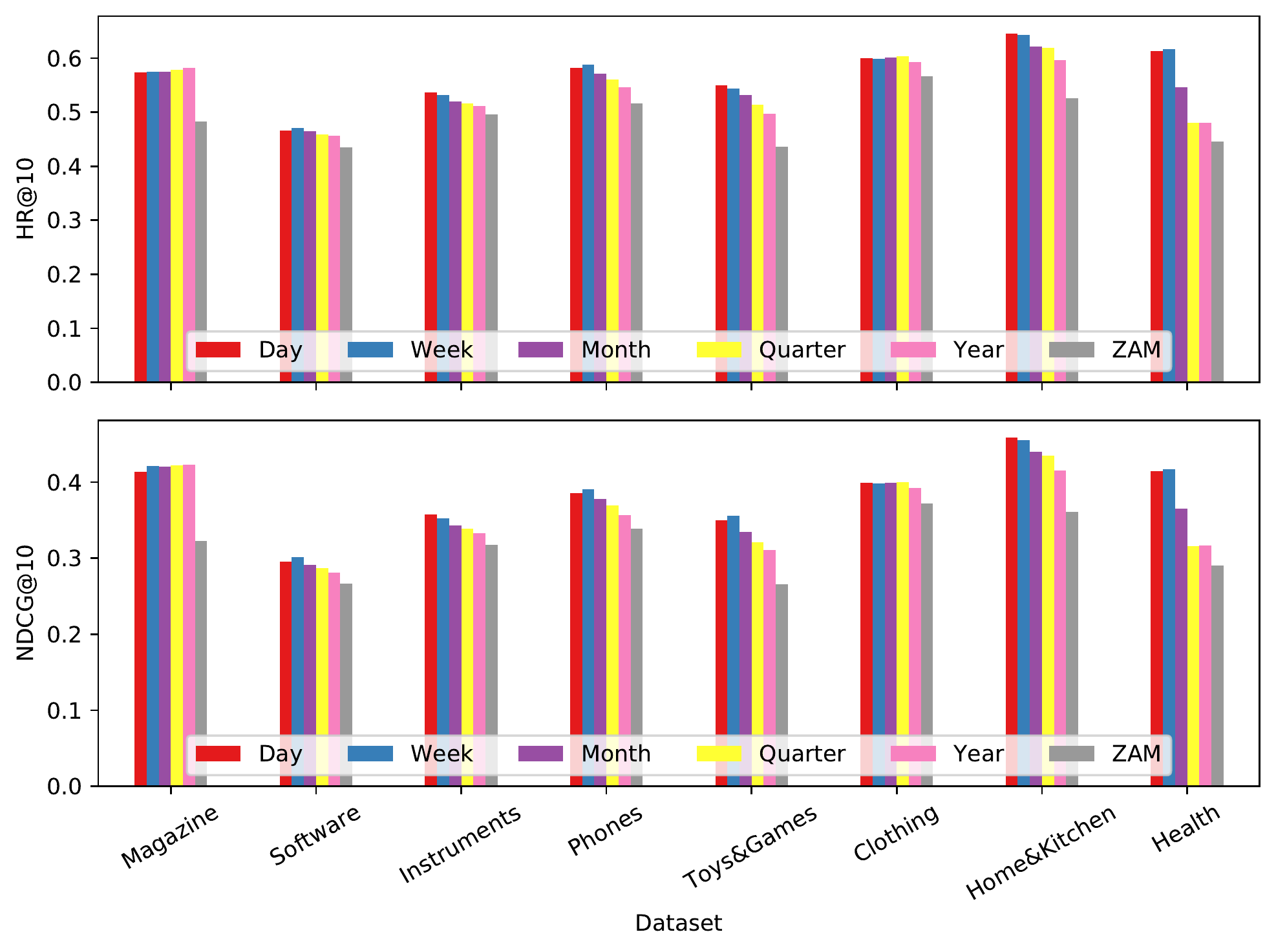}
  \caption{Ablation study of the time interval $R$ for successive graph construction. The results in HR@10 and NDCG@10 are reported. $R$ is varied from a day to a year. }
  \label{fig: time_scale}
\end{figure}

\begin{figure}[t]
  \centering
  \includegraphics[width=0.6\linewidth]{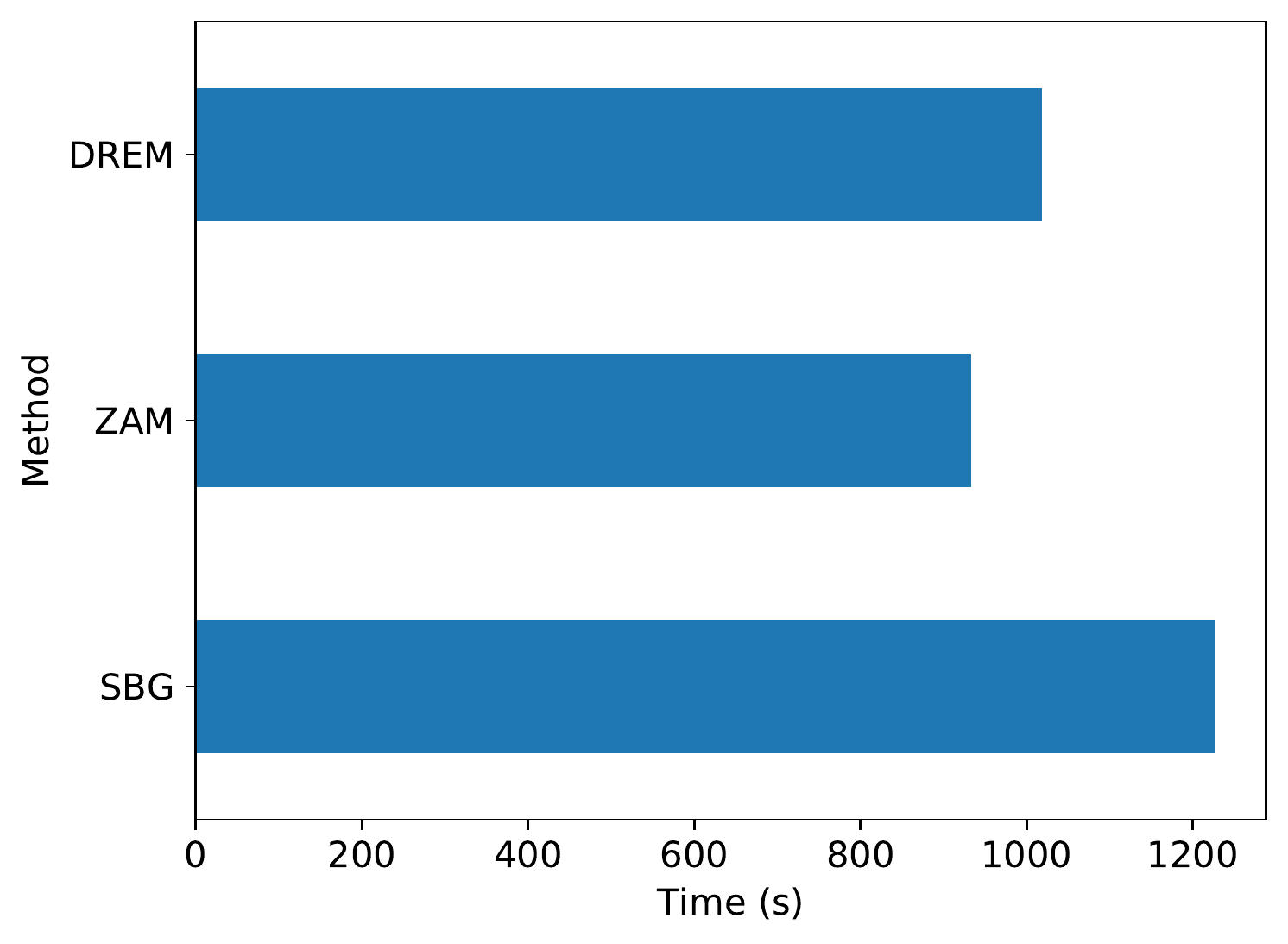}
  \caption{Comparison of the average training time (in seconds) on \textit{Home\&Kitchen}. 
  }
  \label{fig: runtime}
\end{figure}

\subsubsection{\textbf{Runtime Analysis}}

We evaluate the time efficiency of our SBG and the baselines on \emph{Home\&Kitchen}, the largest dataset used in the experiments. As shown in Figure~\ref{fig: runtime}, the running time of our method is comparable with that of ZAM and DREM. All the experiments are conducted on a platform with Intel(R) Xeon(R) Gold 6226R CPU @ 2.90GHz and GeForce RTX 3090.

\section{Conclusion} \label{sec:conclusion}

In this work, we have proposed a generic approach to model user preferences for personalized product search. To exploit local and global user behavior patterns for search personalization, 
we construct a successive behavior graph and capture implicit user preference signals with an efficient jumping graph convolution. 
Our approach can be used as a plug-and-play module in the popular latent space based product search framework and potentially in many other methods to improve their performance. 
Extensive experiments on public datasets demonstrate the effectiveness of our approach. In future work, we plan to investigate the possibility of applying our approach on dynamic behavior graphs for user preference modeling.

\section*{Acknowledgement}
We would like to thank the anonymous reviewers for their helpful comments. This research was supported by the grants of  P0034058 (ZGAL) and P0038850 (ZGD1) funded by Alibaba and the General Research Fund No.15222220 funded by the UGC of Hong Kong.

\bibliographystyle{ACM-Reference-Format}
\bibliography{main}

\clearpage
\newpage
\appendix
\section*{Appendix}
\setcounter{thm}{0}
\newcommand{\A}{\bm A}
\newcommand{\I}{\bm I}
\newcommand{\D}{\bm D}
\newcommand{\La}{\bm L}
\newcommand{\Tr}{\text{Tr}}
\newcommand{\F}{\bm F}
\newcommand{\Hi}{\bm H}

In this section, we provide a proof  for Theorem \ref{thm:jump}. For a better reading experience, we rewrite Eq. (\ref{eq:simple},\ref{eq:laplacian},\ref{eq:jump_enriched_product}) here. \\
The Laplacian-Beltrami operator, which measures the diversity of embeddings, is defined as:
\begin{equation}
    \label{eq:ap_laplacian}
    \Omega(\Hi) = \sum_{k,i,j} a_{ij} (\Hi_{ki} - \Hi_{kj})^2.
\end{equation}
The convolution without jumping connection is defined as:
\begin{equation}\label{eq:ap_simple}
    \Hi^{(l+1)}=\left(\omega\bm I + (1-\omega)\bm D^{-1}\bm A\right) \Hi^{(l)}.
\end{equation}
The convolution with jumping connection is defined as:
\begin{equation}
    \tilde{\Hi}^{(l+1)}=\left(\omega\bm I + (1-\omega)\bm D^{-1}\bm A\right) \left (\beta \Hi^{(0)} + (1-\beta)\tilde{\Hi}^{(l) } \right)\label{eq:ap_jump_enriched_product},
\end{equation}

\begin{thm}
If the initial diversity $\Omega(\Hi^{(0)}) >0$, then for any integer $l > 0$, $\beta \in (0,1)$, and $\omega \in (0.5, 1)$, $\tilde{\Hi}^{(l)}$ is strictly more diverse than $\Hi^{(l)}$:
\begin{equation}\label{eq:k-th}
  \Omega\left(\tilde{\bm H}^{(l)}\right) > \Omega\left({\bm H}^{(l)}\right).
\end{equation}
While $l$ approaches infinity, jumping connections can prevent the diversity of product representations from collapsing to 0 (over-smoothing), i.e.,
\begin{equation}\label{eq:limits}
    \lim_{l\to0} \Omega\left(\tilde{\Hi}^{(l)}\right) > \lim_{l\to0}\Omega\left(\Hi^{(l)}\right) = 0.
\end{equation}
\end{thm}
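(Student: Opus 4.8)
The plan is to rewrite $\Omega(\cdot)$ as a Laplacian quadratic form, symmetrize the convolution operator so that everything diagonalizes in one orthonormal basis, and then compare the jump and no‑jump recursions eigenmode by eigenmode. For the undirected successive behavior graph, Eq.~(\ref{eq:ap_laplacian}) is the Dirichlet energy and equals $\Omega(\Hi)=2\,\Tr(\Hi^{\top}\La\Hi)$ with $\La=\D-\A$ the combinatorial Laplacian. The convolution operator $\bm P:=\omega\I+(1-\omega)\D^{-1}\A$ factors as $\bm P=\D^{-1/2}\bm S\,\D^{1/2}$ with $\bm S:=\omega\I+(1-\omega)\D^{-1/2}\A\D^{-1/2}$ \emph{symmetric}; writing the rescaled iterates $\bm G^{(l)}:=\D^{1/2}\Hi^{(l)}$ and $\tilde{\bm G}^{(l)}:=\D^{1/2}\tilde{\Hi}^{(l)}$, Eq.~(\ref{eq:ap_simple}) becomes $\bm G^{(l)}=\bm S^{\,l}\bm G^{(0)}$, Eq.~(\ref{eq:ap_jump_enriched_product}) becomes $\tilde{\bm G}^{(l+1)}=\bm S\,(\beta\bm G^{(0)}+(1-\beta)\tilde{\bm G}^{(l)})$, and $\Omega(\Hi^{(l)})=2\,\Tr((\bm G^{(l)})^{\top}\mathcal L\,\bm G^{(l)})$ where $\mathcal L=\I-\D^{-1/2}\A\D^{-1/2}$ is the normalized Laplacian (using $\La=\D^{1/2}\mathcal L\,\D^{1/2}$). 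The decisive point is that $\bm S$ and $\mathcal L$ are both polynomials in $\D^{-1/2}\A\D^{-1/2}$, hence commute and share an orthonormal eigenbasis.

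Let $\{\bm u_m\}$ be that eigenbasis with $\D^{-1/2}\A\D^{-1/2}\bm u_m=\mu_m\bm u_m$, $\mu_m\in[-1,1]$, so $\bm S\bm u_m=s_m\bm u_m$ and $\mathcal L\bm u_m=\lambda_m\bm u_m$ with $s_m=\omega+(1-\omega)\mu_m=1-(1-\omega)\lambda_m$ and $\lambda_m=1-\mu_m$; for $\omega\in(0.5,1)$ this gives $s_m\in(0,1]$, with $s_m=1\iff\lambda_m=0$. Expanding $\bm G^{(0)}=\sum_m\bm u_m\bm c_m^{\top}$ yields $\Omega(\Hi^{(l)})=2\sum_m s_m^{2l}\lambda_m\|\bm c_m\|^2$, and projecting the jump recursion onto each $\bm u_m$ gives the scalar recursion $\tilde{\bm c}_m^{(l+1)}=(1-\beta)s_m\,\tilde{\bm c}_m^{(l)}+\beta s_m\bm c_m$ with $\tilde{\bm c}_m^{(0)}=\bm c_m$, whose solution is $\tilde{\bm c}_m^{(l)}=\gamma_m^{(l)}\bm c_m$ where
\[
\gamma_m^{(l)}=t_m^{\,l}+\beta s_m\,\frac{1-t_m^{\,l}}{1-t_m},\qquad t_m:=(1-\beta)s_m\in[0,1),
\]
so $\Omega(\tilde{\Hi}^{(l)})=2\sum_m(\gamma_m^{(l)})^2\lambda_m\|\bm c_m\|^2$.

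The heart of the argument is the scalar inequality $\gamma_m^{(l)}\ge s_m^{\,l}$ for every mode. Using $1-t_m^{\,l}=(1-t_m)\sum_{j=0}^{l-1}t_m^{\,j}$ and $s_m^{\,l}-t_m^{\,l}=\beta s_m^{\,l}\sum_{j=0}^{l-1}(1-\beta)^{\,j}$,
\[
\gamma_m^{(l)}-s_m^{\,l}=\beta\sum_{j=0}^{l-1}(1-\beta)^{\,j}\,s_m^{\,j+1}\,(1-s_m^{\,l-j-1}),
\]
each summand of which is $\ge0$ since $0<s_m\le1$, and strictly positive when $j<l-1$ and $s_m<1$; hence $\gamma_m^{(l)}\ge s_m^{\,l}$, strict for $l\ge2$ and $s_m<1$. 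Since $\Omega(\Hi^{(0)})=2\sum_m\lambda_m\|\bm c_m\|^2>0$, some mode $m^{*}$ has $\lambda_{m^{*}}\|\bm c_{m^{*}}\|^2>0$ and therefore $s_{m^{*}}<1$; summing over modes gives $\Omega(\tilde{\Hi}^{(l)})>\Omega(\Hi^{(l)})$ for $l\ge2$. For the limits, $\omega\in(0.5,1)$ forces $0<s_m<1$ on every mode with $\lambda_m>0$, so $s_m^{2l}\to0$ and $\Omega(\Hi^{(l)})\to0$, while $\gamma_m^{(l)}\to\beta s_m/(1-(1-\beta)s_m)>0$; as the sum over $m$ is finite, $\Omega(\tilde{\Hi}^{(l)})$ converges to a value $\ge 2\,(\beta s_{m^{*}}/(1-(1-\beta)s_{m^{*}}))^2\lambda_{m^{*}}\|\bm c_{m^{*}}\|^2>0$.

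The step needing the most care is the symmetrization: one must verify that conjugation by $\D^{1/2}$ simultaneously turns the non‑symmetric operator $\bm P$ into the symmetric $\bm S$ \emph{and} turns $\Omega$ into the quadratic form of the normalized Laplacian, so that both tracked quantities are diagonal in a single basis and everything afterwards is scalar bookkeeping. Two side conditions then become transparent: $\omega>0.5$ is exactly what keeps all eigenvalues $s_m$ of the convolution operator strictly positive (in particular it tames the $\mu_m=-1$ eigenvalue produced by the bipartite structure of the SBG), which makes the key sum termwise nonnegative and the limiting factor $\beta s_m/(1-(1-\beta)s_m)$ strictly positive, while $\omega<1$ together with $\beta\in(0,1)$ keeps $s_m,t_m<1$ off the constant mode and hence drives the over‑smoothing of $\Hi^{(l)}$. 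Finally, the mode identity exposes the boundary case: at $l=1$ only the $j=l-1$ term is present and it is an equality, consistent with $\tilde{\Hi}^{(1)}=\bm P\Hi^{(0)}=\Hi^{(1)}$, so the strict inequality genuinely begins at $l\ge2$.
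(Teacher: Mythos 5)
Your proof is correct and follows the same basic strategy as the paper's: diagonalize the convolution operator, reduce $\Omega$ to a sum over eigenmodes, and compare the scalar filter responses of the jump and no-jump recursions (your $\gamma_m^{(l)}$ vs.\ $s_m^l$ are exactly the paper's $f_l(\lambda)$ vs.\ $g_l(\lambda)$). Two points where your write-up genuinely improves on the paper's are worth recording. First, the paper eigendecomposes $\bm L = \bm I - \bm D^{-1}\bm A$ as $\bm U\bm\Lambda\bm U^{\top}$ with an orthonormal $\bm U$ and identifies $\Omega(\bm H)$ with $\sum_k \bm H_{:,k}^{\top}\bm L\bm H_{:,k}$; neither step is valid as written, since $\bm I-\bm D^{-1}\bm A$ is not symmetric and the Dirichlet energy is the quadratic form of $\bm D-\bm A$, not of $\bm I-\bm D^{-1}\bm A$. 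Your conjugation by $\bm D^{1/2}$, which simultaneously symmetrizes the operator and converts $\Omega$ into the quadratic form of the normalized Laplacian in the same orthonormal basis, is the missing justification; it yields the same filter functions with $\lambda$ ranging over the spectrum of the normalized Laplacian, so the paper's qualitative conclusions survive. Second, your boundary-case observation is a real correction to the statement: $\tilde{\bm H}^{(1)} = \bm H^{(1)}$ identically (and indeed $f_1(\lambda)=g_1(\lambda)$ in the paper's own notation), so the claimed strict inequality holds only for $l\ge 2$, not for all integers $l>0$; the paper's strictness argument overlooks this. Your limit argument and the role of $\omega>0.5$ in keeping $s_m>0$ on the bipartite SBG match the paper's reasoning and are sound.
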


\begin{proof}

Let $\La = \I - \D^{-1}\A$, then $\Omega(\cdot)$ becomes
\begin{equation}
  \Omega(\Hi) = \sum_k \Hi_{:,k}^\top\La\Hi_{:,k},  
\end{equation}
where $\Hi_{:,k}$ is the $k$-th column of $\Hi$. Denote arbitrary column of $\Hi$ by $h$, then we only need to prove 
$$(\tilde{h}^{(l)})^\top \La \tilde{h}^{(l)} = \Omega(\tilde{h}^{(l)}) > \Omega(h^{(l)}) = (h^{(l)})^\top \La h^{(l)}$$
and
$$\lim_{l\to0} \Omega(\tilde{h}^{(l)}) > \lim_{l\to0} \Omega(h^{(l)}) = 0.$$

Let $\F = \left(\omega\bm I + (1-\omega)\bm D^{-1}\bm A\right)$. 
From Eq. (\ref{eq:ap_simple}) and (\ref{eq:ap_jump_enriched_product}), we could obtain general formula for $h^{(l)}$ and $\tilde{h}^{(l)}$:
\begin{align}
    h^{(l)} =& \F^l h^{(0)} \label{eq:ap_no_jump_f}\\
    \tilde{h}^{(l)} =& \left((1-\beta)^{l}\F^l + \beta\sum_{k=1}^{l} (1-\beta)^{k-1}\F^k\right)h^{(0)} \label{eq:ap_jump_f}
\end{align}

Denote the eigen-decomposition of $\La$ by $\bm U \bm \Lambda \bm U^\top$, where $\bm U$ is the eigenbasis and $\bm \Lambda$ is a diagonal matrix with corresponding eigenvalues, then
\begin{equation}\label{eq:ap_F}
    \F = (\I-(1-\omega) \La) = \bm U (\I-(1-\omega)\bm \Lambda)\bm U^\top
       =  \bm U \bm M \bm U^\top
\end{equation}
where $\bm M = \I-(1-\omega)\bm \Lambda$. Denote $c=\bm U^\top h^{(0)}$ and substitute $\F$ in Eq. (\ref{eq:ap_no_jump_f},\ref{eq:ap_jump_f}) by (\ref{eq:ap_F}):
\begin{align}
    \Omega(h^{(0)}) =& c^\top\bm\Lambda c = \sum_i \lambda_i c_i^2, \\
    \Omega(h^{(l)}) =& c^\top \bm M^{2l}\bm \Lambda c = \sum_i (1-(1-\omega)\lambda_i)^{2l}\lambda_ic_i^2 \\
    =& \sum_i g_l^2(\lambda_i)\lambda_i c_i^2, \label{eq:omega_no_jump}\\
    \Omega(\tilde{h}^{(l)}) 
    =& c^\top \left((1-\beta)^{l}\bm M^l + \beta\sum_{k=1}^{l} (1-\beta)^{k-1}\bm M^k\right)^2 \bm\Lambda c \\
    =& \sum_i f_l^2(\lambda_i)\lambda_i c_i^2, \label{eq:omega_jump}
\end{align}
where 
\begin{align}
    g_l(\lambda) =& (1-(1-\omega)\lambda)^l, \\ 
    f_l(\lambda) =& (1-\beta)^lg_l(\lambda) + \beta\sum_{k=1}^l(1-\beta)^{k-1}g_k(\lambda).
\end{align}
Now, we only need to compare $g_l(\lambda)$ and $f_l(\lambda)$.  Notice that $\omega\in(0.5, 1)$ and $\lambda$ is the eigenvalue of the normalized graph laplacian $\La$, so $\lambda \in [0,2]$ and $1-(1-\omega)\lambda \in (0,1]$.
Then, $g_l(\lambda)$ is positive and decreases as $l$ increases:
\begin{equation}
    g_{l_1}(\lambda) \ge  g_{l_2}(\lambda) > 0, \;\text{for any}\; l_2 > l_1.
\end{equation}
The equality holds only if $\lambda=0$. When it comes to $f_l(\lambda)$, we have
\begin{equation}
    f_l(\lambda) \ge \left((1-\beta)^l+\beta\sum_{k=1}^l(1-\beta)^{k-1}\right)g_l(\lambda) = g_l(\lambda) > 0. \label{eq:fg}
\end{equation}
Given Eq. (\ref{eq:omega_no_jump},\ref{eq:omega_jump},\ref{eq:fg}), we can conclude 
\begin{equation}
    \Omega(\tilde{h}^{(l)}) \ge \Omega(h^{(l)}).
\end{equation}
Notice that the initial diversity $\Omega(h^{(0)}) > 0$, so there exists such $\lambda_i$ that $\lambda_i c_i^2 > 0$ and $\lambda_i \not= 0$, and the equality does not hold and the inequality (\ref{eq:k-th}) is proved. 

Now, we consider the limits of $f_l(\lambda)$ and $g_l(\lambda)$:
\begin{align}
    \lim_{l\to\infty} f_l(\lambda)= & \frac{\beta(1-(1-\omega)\lambda)}{1-(1-\beta)(1-(1-\omega)\lambda)} > 0, \\
    \lim_{l\to\infty} g_l(\lambda)= & 0, \quad \text{for all}\; \lambda > 0.
\end{align}
As a consequence, 
\begin{equation}
    \lim_{l\to0} \Omega(\tilde{h}^{(l)}) > \lim_{l\to0} \Omega(h^{(l)}) = 0.
\end{equation}
Eq. (\ref{eq:limits}) is also proved.
\end{proof}

\end{document}